 \newtheorem{theorem}{Theorem} [section]
 \newtheorem{lemma}[theorem]{Lemma} 
 \newtheorem{corollary}[theorem]{Corollary}
 \newtheorem{claim}{Claim} 
 \newdefinition{definition}{Definition} 
 \newproof{proof}{Proof}
\newproof{pot}{Proof of Theorem \ref{thm2}}
\begin{document}

\begin{frontmatter}

\title{Red Domination in Perfect Elimination Bipartite Graphs}

\author{Nesrine Abbas}

\address{MacEwan University, 5-173, 10700 104 Avenue, Edmonton, AB, T5J 4S2, Canada}

\ead{abbasn3@macewan.ca}

\begin{abstract}
The $k$ red domination problem for a bipartite graph $G=(X,Y,E)$ is to find a subset $D \subseteq X$ of cardinality at most $k$ that dominates vertices of $Y$. The decision version of this problem is $\textsc{NP}$-complete for general bipartite graphs but solvable in polynomial time for chordal bipartite graphs. We strengthen that result by showing that it is $\textsc{NP}$-complete for perfect elimination bipartite graphs. 
We present a tight upper bound on the number of such sets in  bipartite graphs, 
and show that we can calculate that number in linear time for convex bipartite graphs. We present a linear space linear delay enumeration algorithm that needs only linear preprocessing time.
\end{abstract}

\begin{keyword}

Perfect elimination bipartite graph \sep Convex bipartite graph \sep Counting \sep Enumeration \sep Generation \sep Red dominating set \sep Blue dominating set
\end{keyword}
\end{frontmatter}

\section{Introduction}

A red dominating set in a bipartite graph $G=(X,Y,E)$ is equivalent to a set cover where each set is represented by a vertex in $X$, vertices of $Y$ represent elements of the universe, and an edge between $x \in X$ and $y \in Y$ is added to the graph if the corresponding set contains the corresponding element. The set cover problem is to identify the smallest collection of subsets whose union equals the universe. The decision version of the set cover problem is $\textsc{NP}$-complete~\cite{Karp72}. There are well-known greedy approximation algorithms  to compute an approximate set cover, see~\cite{Chvatal79} for example. The aim of this paper is to study the problem for special bipartite graphs. The term red dominating set is better suited, and is what has been used in the literature, for graph contexts. The name originates from the view of a graph as partitioned into $Red$ and $Blue$ sets. 
The problem has been studied for general graphs in~\cite{AbuMouLied11}  where the authors develop an exact exponential-time algorithm for connected red dominating set.
The problem has practical application in the assignment of trains to stations. In~\cite{Weihe98}, the author develops an algorithm based on real data to find a subset of the stations that covers all trains.
In this paper we prove that the decision problem is $\textsc{NP}$-complete for perfect elimination bipartite graphs. We develop algorithms for counting and enumerating minimum cardinality red dominating sets in convex bipartite graphs. We present a tight upper bound on the number of such sets in bipartite graphs.

Convex bipartite graphs arise in scheduling problems. In~\cite{DekSah84}, for example, a convex bipartite graph represents jobs in partition $X$ and time slots in partition $Y$.  A red dominating set on such a graph is a subset of jobs that covers the time range. 
The red dominating set problem has been studied for chordal bipartite graphs, which properly include convex bipartite graphs, in~\cite{Golo16}. In that paper the authors present an $\mathcal O(n \cdot \min (|E| \cdot \log{n}, n^2))$ delay enumeration algorithm. Other domination, counting, and enumeration problems have been studied for convex bipartite graphs and classes of bipartite graphs that include them. Enumerating spanning trees in bipartite graphs has been studied in~\cite{ZhouBu20}. The authors in~\cite{LinChen17} present a linear time algorithm to count independent sets in tree-convex graphs. Counting independent sets in other bipartite graph subclasses that include convex bipartite graphs has been studied in~\cite{Lin18}. In~\cite{DamaMullKrat90}, the authors show that various domination problems are solvable in polynomial time for convex bipartite graphs. 

The next section contains definitions that will be used throughout the paper. 
We will then prove that the red dominating set decision problem is $\textsc{NP}$-complete for perfect elimination bipartite graphs. 
In the section that follows we present a tight upper bound on the number of vertices in a red dominating set in a  bipartite graph, 
show how to calculate the number of such sets in linear time for convex bipartite graphs and develop a linear space linear delay enumeration algorithm. 
We then conclude the paper.

\section{Definitions}
A graph $G=(V,E)$ in this paper is finite, connected, undirected, with no loops and no parallel edges. We will use $n$ to denote $|V|$. The \textit {neighbourhood set of vertex $v$} is $N(v)=\{u: uv \in E\}$. The {\textit degree of $v$} is $deg(v)=|N(v)|$. 
The graph induced by a subset $D \subseteq V$ is denoted by $G[D]$. $G[D]$ (or $D$) is an \textit {independent set} if it induces a graph that has no edges. 
A set of vertices $D \subseteq V$ \textit {dominates} another set $U \subseteq V$ if $\forall u \in U$, there is $v \in D$ such that $u=v$ or $uv \in E$. When $U$ is a single vertex $u$, we say that $D$ dominates $u$. A set $D \subseteq V$ is \textit {dominating} if it dominates $V$. 
A \textit {path} of length $k$, or a $k$-path, in $G$ is a sequence of distinct vertices $(u_{1},\ldots,u_{k+1})$ such that $u_{j} u_{j+1}\in E$, $\forall j=1,\ldots,k$. 
A \textit {connected graph} is one that has a path between each pair in its vertex set. 
A \textit {cycle} of length $k$ in $G$ is a sequence of distinct vertices $(u_{1},\ldots,u_{k})$ such that $u_{j}u_{j+1} \in E$ $\forall j=1,\ldots,k-1$, $u_{1}u_{k} \in E$. A \textit {chord} in a cycle is an edge joining non-consecutive vertices.

A \textit {bipartite graph} $G=(X,Y, E)$ is one whose vertex set $X \cup Y$ can be partitioned into two independent sets $X$ and $Y$. We will denote $|X|$ by $n_X$ and $|Y|$ by $n_Y$. A \textit {complete bipartite graph $K_{n_X,n_Y}$} contains all possible edges between its two partitions. 
An edge $xy$ is \textit{bisimplicial} if $N(x) \cup N(y)$ induces a complete bipartite graph. Let $\sigma=(e_i,\ldots,e_p)$, where $e_i=x_{i}y_{i}$, be an ordering of pairwise disjoint edges of $G$, $S_j=\{x_1,x_2,\ldots,x_j\} \cup \{y_1,y_2,\ldots,y_j\}$, and $S_0=\emptyset$. $\sigma$ is said to be a \textit{perfect edge elimination scheme} if $e_{j+1}$ is bisimplicial in $G[(X \cup Y) \setminus S_j]$, for $j=0,\ldots,p-1$, and $G[(X \cup Y) \setminus S_p]$ has no edges. A bipartite graph that admits a perfect edge elimination scheme is called a
\textit{perfect elimination bipartite graph}. 
A bipartite graph $G=(X,Y,E)$ is said to be \textit {chordal bipartite} if each cycle of length greater than $4$ has a chord. 
A bipartite graph $G=(X,Y,E)$ is said to be \textit {$Y$-convex} if vertices of $Y$ can be ordered so that for each $x \in X$ neighbours of $x$ appear consecutively in $Y$. We will simply refer to that graph as \textit {convex bipartite}. Such an ordering is called a \textit {convex ordering} and can be calculated in linear time~\cite{BoothLue76}. Figure \ref{fig:convex_graph} shows a convex bipartite graph. 
Convex bipartite graphs are a proper subset of chordal bipartite graphs~\cite{BranLeSpin99} which in turn are properly contained in perfect elimination bipartite graphs~\cite{Golumbic80}.

Let the neighbourhood set of some vertex $x, x \in X$, in the convex bipartite graph $G=(X,Y,E)$ whose vertices are in convex ordering be $N(x)=\{y_{a}, y_{a+1}, \ldots,y_{a+b}\}, y_{a} < \ldots <y_{a+b}$. Then $left(x)=y_{a}$ is the leftmost neighbour of $x$, and $right(x)=y_{a+b}$ is the rightmost neighbour of $x$. The \textit {neighbourhood array $\mathcal N$} stores the two values $left(x)$ and $right(x)$ for each vertex $x \in X$, i.e., $\mathcal N[i]=(left(x_i),right(x_i))$. Given a convex ordering, values of $left(x)$ may be calculated in $\mathcal O(n+|E|)$ time using a simple procedure that starts at $y=y_1$, moves sequentially in $Y$, and assigns the smallest value to $left(x)$ for all $x \in N(y)$. Values of $right(x)$ may be calculated similarly. We will use $\mathcal N$ as the data structure to represent a convex bipartite  graph in our algorithms.

A \textit {lexicographic convex ordering (lex-convex)} is a convex ordering and for $x_i, x_j$, $i < j$ if $left(x_i )< left(x_j)$ or $left(x_i)=left(x_j)$ and $right(x_i) < right(x_j)$. Cases where $left(x_i)=left(x_j)$ and $right(x_i) = right(x_j)$, i.e., $N(x_i) = N(x_j)$, can be numbered arbitrarily without disturbing other vertices' order. Given the array $\mathcal N$, a simple LSD (Least Significant Digit) radix sort may be used to obtain a lex-convex ordering of the vertices of $X$ in $\mathcal O(n)$ time. The keys will be  $left(x)$$right(x)$ for each $x \in X$. 
The ordering of the vertices of the graph in Figure~\ref{fig:convex_graph} is lex-convex. We assume a lex-convex ordering for all convex bipartite graphs in this paper.

The reader is referred to~\cite{Berge85} and~\cite{Golumbic80} for any missing graph theory definitions or notations.

We formally define the \textit {$k$ RED DOMINATION} decision problem and its minimum cardinality version.

\begin{definition}
\textit {$k$ RED DOMINATION}: Given a bipartite graph $G=(X,Y,E)$ and integer $k$, $1 \leq k \leq |X|$, is there a subset $D \subseteq X$ of cardinality at most $k$ that dominates $Y$, i.e., $\forall y \in Y$, does there exist some $x \in D,$ s.t. $xy \in E$?
\end{definition}

\begin{definition}
\textit {MCRD (Minimum Cardinality Red Domination)}: Given a bipartite graph $G=(X,Y,E)$, find a minimum cardinality red dominating set in $G$.
\end{definition}

\begin{figure}
\begin{center}
\includegraphics[width=3.3in, height=2.1in]{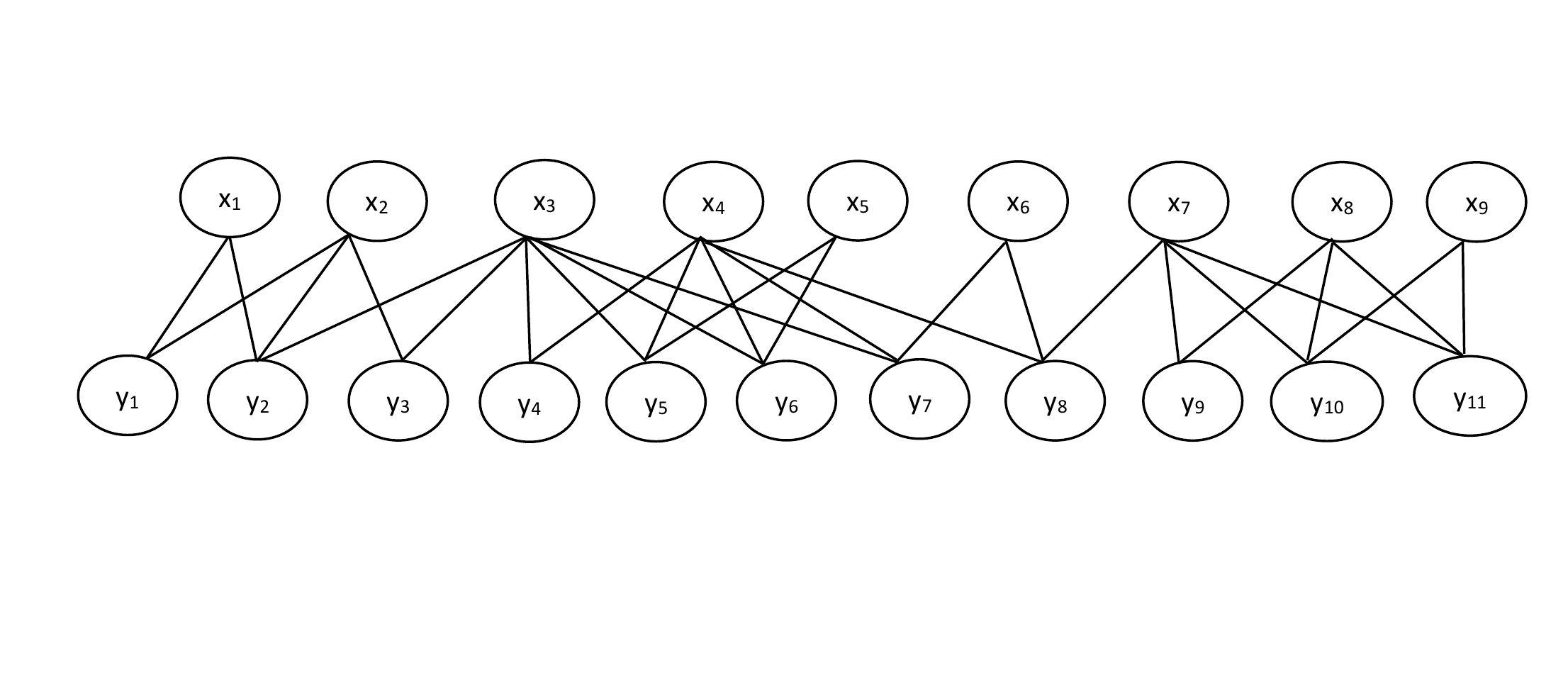}
\caption{A convex bipartite graph.}
\label{fig:convex_graph}
\end{center}
\end{figure}

\section {$\textsc{NP}$-completeness for Perfect Elimination Bipartite Graphs}
\label{sec:prel}

\begin{theorem}
$k$ RED DOMINATION is $\textsc{NP}$-complete for perfect elimination bipartite graphs.
\label{the:NP-c}
\end{theorem}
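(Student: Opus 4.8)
The plan is to prove both membership in $\textsc{NP}$ and $\textsc{NP}$-hardness. Membership is immediate: a candidate set $D \subseteq X$ is a certificate of polynomial size, and checking $|D| \le k$ together with verifying in $\mathcal O(|E|)$ time that every $y \in Y$ has a neighbour in $D$ confirms that $D$ dominates $Y$. The substance is the hardness reduction, and here I would reduce from \emph{$k$ RED DOMINATION on general bipartite graphs}, which is $\textsc{NP}$-complete since it is exactly SET COVER ($X$ supplies the sets, $Y$ the elements) and SET COVER is $\textsc{NP}$-complete~\cite{Karp72}. Restricting to \emph{feasible} instances, in which every $y \in Y$ has a neighbour in $X$ (otherwise no red dominating set exists), and to connected instances loses no generality; for concreteness one may think of the incidence graphs of VERTEX COVER, where each $y$ has degree exactly $2$ and connectivity is inherited from the input graph.

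Given such an instance $(G=(X,Y,E),k)$, I would construct $G'=(X',Y,E')$ by attaching to each $y_j \in Y$ a new pendant vertex $x'_j$ on the $X$-side, so that $X' = X \cup \{x'_1,\ldots,x'_{n_Y}\}$ and $E' = E \cup \{\,x'_j y_j : 1 \le j \le n_Y\,\}$, with each $x'_j$ adjacent only to $y_j$. The output instance is $(G',k)$, computable in polynomial time, and $G'$ is connected since every pendant attaches to an existing vertex. The first thing to establish is that $G'$ is a perfect elimination bipartite graph. For this I would exhibit the scheme $\sigma=(x'_1 y_1,\ldots,x'_{n_Y} y_{n_Y})$ of pairwise disjoint edges and argue that $x'_{j+1}y_{j+1}$ is bisimplicial when it is eliminated: in the graph remaining after deleting $y_1,\ldots,y_j$ and $x'_1,\ldots,x'_j$, we have $N(x'_{j+1})=\{y_{j+1}\}$ while $N(y_{j+1})$ consists solely of vertices of $X$, each adjacent to the single $Y$-vertex $y_{j+1}$, so $N(x'_{j+1})\cup N(y_{j+1})$ induces a complete bipartite graph. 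After all $n_Y$ edges are removed, only the independent set $X$ survives, which is edgeless, completing the scheme.

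It then remains to prove that $(G,k)$ is a yes-instance if and only if $(G',k)$ is. The forward direction is trivial, since a red dominating set $D \subseteq X$ of $G$ still dominates $Y$ in $G'$, the edges spanning $X \cup Y$ being unchanged. For the converse I would use a \emph{replacement} argument: given a red dominating set $D' \subseteq X'$ of $G'$ with $|D'|\le k$, replace each pendant $x'_j \in D'$ by an arbitrary original neighbour of $y_j$ (which exists by feasibility) and keep all original vertices of $D'$. This yields $D \subseteq X$ with $|D|\le|D'|\le k$; any $y$ dominated by an original member of $D'$ remains dominated, and any $y=y_j$ dominated only by its pendant $x'_j$ is now dominated by the substituted neighbour of $y_j$, so $D$ dominates $Y$ in $G$.

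The step I expect to be the main obstacle is the converse equivalence. Intuitively the pendants behave like singleton sets that might make domination \emph{cheaper}, and one must argue that they never do; this is precisely what the feasibility assumption and the replacement argument secure. A secondary point worth verifying is that the construction does not inadvertently land inside the chordal bipartite class, where the problem is tractable: adding pendants introduces no chords, so any chordless cycle of length $6$ or more in $G$ persists in $G'$, confirming that the reduction genuinely exploits the gap between chordal bipartite and perfect elimination bipartite graphs.
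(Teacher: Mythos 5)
Your proposal is correct and follows essentially the same route as the paper: the same pendant-vertex construction (attaching a new $X$-side vertex to each $y_j$), the same perfect edge elimination scheme $(x'_1y_1,\ldots,x'_{n_Y}y_{n_Y})$, and the same replacement argument for the converse direction, with your appeal to feasibility playing the role of the paper's appeal to connectedness. Your write-up is in fact slightly more detailed than the paper's, notably in verifying bisimpliciality of each eliminated edge.
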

\begin{proof}
Clearly $k$ RED DOMINATION is in $\textsc{NP}$ because given a set $D$, we can verify in polynomial time if it is a solution for $k$ RED DOMINATION for the bipartite graph $G=(X,Y,E)$. 

We complete the proof by reduction from $k$ RED DOMINATION for bipartite graphs, which is $\textsc{NP}$-complete~\cite{Karp72}.
Given a connected bipartite graph $G = (X,Y,E)$ and a positive integer $k, 1 \leq k \leq n_X$, we construct a perfect elimination bipartite graph $G_P=(X_P,Y_P,E_P)$
such that $G$ has a red dominating set of cardinality $k$ if and only if $G_P$ has a red dominating set of cardinality $k$.
Let $Y_P=Y$, $X_P=X \cup T$ where $T=\{t_i: y_i \in Y\}$, and $E_P=E \cup \{t_{i}y_{i}, i=1,\ldots,n_Y\}$, i.e., we include all edges in the original graph and attach a vertex $t$ to each vertex $y$.
Clearly $G_P$ is bipartite and can be constructed in polynomial time. $G_P$ has the perfect edge elimination scheme $\sigma=(t_{1}y_{1},t_{2}y_{2},\ldots,t_{n_{Y}}y_{n_{Y}})$.

Let $D$ be a red dominating set in $G$. Since $Y_P=Y$, then $D$ induces a red dominating set in $G_P$.

Now let $D$ be a red dominating set in $G_P$. If $D$ contains no vertices from $T$, then $D$ induces a red dominating set in $G$. If $D \cap T \neq \emptyset$, then there is a vertex $x$ in $X_P \setminus T$ such that $N(x) \cap N(t) \neq \emptyset$ that may replace each such vertex $t \in D \cap T$ because we started with a connected graph $G$. Such a set $D$, with no vertices from $T$, will correspond to a red dominating set with the same cardinality in $G$. $\square$\end{proof}

\section {Counting and Enumerating All MCRD Sets for a Convex Bipartite Graph} 
In this section, we show a tight upper bound on the number of MCRD sets of cardinality $k$ in an arbitrary  bipartite graph and show that there are convex bipartite graphs that possess that number of MCRD sets. We present an algorithm that enumerates all MCRD sets in an arbitrary convex bipartite graph $G=(X,Y,E)$ in linear space and linear delay after a linear time preprocessing step. The algorithm operates in two stages. The first is a labelling preprocessing stage. The second is a branch and bound search on the labelled vertices of $X$ to output the MCRD sets. The labelling algorithm has been extended to calculate the number of MCRD sets. 

\begin{theorem}
Let $d$ be the maximum degree of vertices in $Y$ in the bipartite graph $G=(X,Y,E)$, $d=\max_{y \in Y} deg(y)$, $d \leq n_X$. The number of MCRD sets in an arbitrary  bipartite graph $G=(X,Y,E)$ is $\mathcal O(d^{k})$, where $k$ is the cardinality of an MCRD set.
\label{the:num-sets}
\end{theorem}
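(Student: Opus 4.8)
The plan is to bound the number of MCRD sets by the number of leaves of a bounded branching tree, exploiting the one structural fact we really have: every red dominating set $D$ must, for each $y \in Y$, contain at least one neighbour of $y$, and there are at most $d$ such neighbours. First I would fix an arbitrary ordering $y_1,\ldots,y_{n_Y}$ of $Y$ and build a search tree. At the root, $y_1$ must be dominated, so I branch into the at most $d$ ways of placing one vertex of $N(y_1)$ into the set under construction; at a general node, having already chosen some vertices and thereby covered part of $Y$, I take the uncovered vertex $y$ of smallest index and branch on its at most $\deg(y)\le d$ neighbours. Each tree edge adds exactly one vertex of $X$, so a node at depth $j$ carries a partial set of size $j$.

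Next I would associate to every MCRD set a single root-to-node path by a deterministic greedy reading: given $D$, I run the branching while always following the branch that selects, among the neighbours of the current least-index uncovered $y$, the smallest-index one lying in $D$. Since $D$ dominates $Y$, such a neighbour always exists, so the walk proceeds until all of $Y$ is covered, reading off a set $D' \subseteq D$ that is itself dominating. This is exactly where minimality enters: because $D$ has minimum cardinality and $D' \subseteq D$ dominates $Y$, we must have $D'=D$, and the walk therefore uses exactly $k$ steps, each step introducing a genuinely new vertex (the chosen $y$ is dominated by no previously selected vertex). Hence every MCRD set is read off at a unique node of depth exactly $k$, and distinct MCRD sets yield distinct nodes, since $D$ is recovered from its path.

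Finally I would count. The branching factor is at most $d$ and the nodes of interest lie at depth $k$, so there are at most $d^k$ of them; composed with the injection of MCRD sets into these nodes, this shows the number of MCRD sets is at most $d^k$, i.e. $\mathcal{O}(d^k)$.

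I expect the main obstacle to be the claim that the greedy reading recovers $D$ exactly and halts in precisely $k$ steps, as this is where the minimum cardinality hypothesis is indispensable: for a non-minimum dominating set the reading could terminate early with a proper dominating subset, which would destroy both the depth-$k$ claim and the injectivity. The remaining work is routine bookkeeping, namely checking that the branching factor bound $d$ holds at every node (it does, since the branched-on vertex $y$ has $\deg(y)\le d$) and that each step contributes a new vertex of $X$.
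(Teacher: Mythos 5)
Your proposal is correct and takes essentially the same approach as the paper: both arguments build candidate sets by repeatedly branching on the at most $d$ neighbours of some not-yet-dominated vertex of $Y$, observe that every MCRD set arises within $k$ levels of this branching process, and conclude the $\mathcal{O}(d^k)$ bound. Your write-up is somewhat more careful than the paper's (the explicit deterministic reading of $D$ along a tree path, the use of minimality to force $D'=D$, and the resulting injectivity), but the underlying idea is identical.
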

\begin{proof}
Let $d$ be the maximum degree of vertices in $Y$  in the  bipartite graph $G=(X,Y,E)$, $d=\max_{y \in Y} deg(y)$, $d \leq n_X$. Choose an arbitrary vertex $y_1 \in Y$. There are $deg(y_1) \leq d$ choices for the first vertex in an MCRD set  for $G$. 
Given vertices $u_1, u_2, \ldots u_a, a \geq 1$, in an MCRD set $D$, we only need to consider a vertex from $N(y)$ for an arbitrary vertex $y \in Y$ that is not dominated by  $u_1, u_2, \ldots, u_a$ for the next vertex in $D$. At each step in the process to build $D$, at most $d$ vertices need to be considered.
Using this process, some of the sets obtained may not be MCRD sets of cardinality $k$, however all  MCRD sets of cardinality $k$ will be generated. Hence the number of MCRD sets is $\mathcal O(d^k)$. $\square$\end{proof}

Figure~\ref{fig:enum} shows a convex bipartite graph that possesses that upper bound of MCRD sets. Not all vertices and edges are shown in the figure. The graph consists of $k, k>1$, complete bipartite graphs $K_{d-1,d-1}$, with an edge joining pairs of consecutive complete bipartite graphs {\textemdash} the thick edges shown in the figure. The maximum degree of vertices in $Y$ is $d$ and it is that of vertices incident with edges joining the complete bipartite graphs, shown filled with black in the figure. Any MCRD set will contain exactly one vertex from each complete bipartite graph and will be of cardinality $k$. Thus the number of MCRD sets is $(d-1)^k=\mathcal O(d^k)$. It is worth noting that we can construct a graph with $d^k$ MCRD sets by removing the thick edges joining the complete bipartite graphs but this will render the graph disconnected.

\begin{figure}
\begin{center}
\includegraphics[width=4in, height=2.3in]{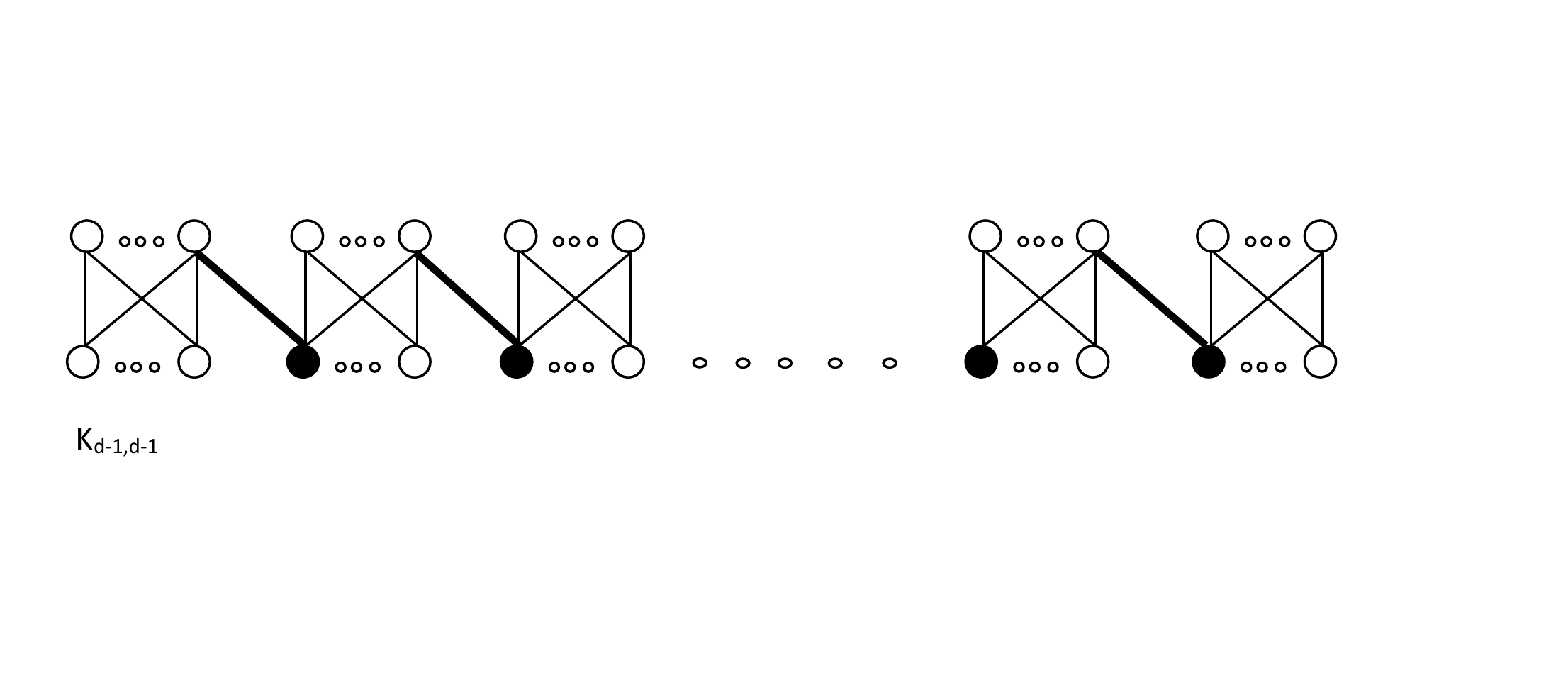}
\caption{A convex bipartite graph with $\mathcal O(d^k)$ MCRD sets.}
\label{fig:enum}
\end{center}
\end{figure}

In the next lemma we prove some algorithmic properties for MCRD sets in convex bipartite graphs.
\begin{lemma}
\label{lem:dom-neighb}
Let $D=\{u_{1}, u_{2}, \ldots,u_{k}\}$, $u_{1} < u_{2}< \ldots u_{k}$, be an MCRD set of cardinality $k$ for the convex bipartite graph $G=(X,Y,E)$. The following is true.
\begin{itemize}
\item For distinct vertices $u,v \in D$,  it cannot be the case that $left(u) \leq left(v) \leq right(v) \leq right(u)$, i.e., $N(v) \not\subseteq N(u), \forall u,v \in D$, $u \neq v$. 
\item For all consecutive pairs $u,v , u< v$ in $D$, $left(u)<left(v) \leq right(u)+1 \leq right(v)$, i.e., $N(u) \cup N(v)$ consists of consecutive vertices in $Y$.
\item For each vertex $u \in D$, $D$ cannot contain more than one vertex $v>u$ such that $left(u)<left(v) \leq right(u)+1 \leq right(v)$.
\end{itemize}
\end{lemma}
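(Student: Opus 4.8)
The plan is to derive all three items from two facts: the minimality of $D$ (no proper subset of $D$ dominates $Y$), and the interval structure of convex bipartite graphs, namely that under the convex ordering each neighbourhood is the contiguous block $N(x)=\{y_{left(x)},\ldots,y_{right(x)}\}$ of $Y$. The first item I would prove directly from minimality: if some $v\in D$ had $N(v)\subseteq N(u)$ for a distinct $u\in D$, then every vertex of $Y$ dominated by $v$ is already dominated by $u$, so $D\setminus\{v\}$ would dominate $Y$, contradicting that $D$ is an MCRD set of cardinality $k$. Since neighbourhoods are intervals, $N(v)\subseteq N(u)$ is exactly the condition $left(u)\le left(v)\le right(v)\le right(u)$, which is what the first item excludes.

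Next I would record a monotonicity consequence that drives the remaining items. For any $u<v$ in $D$, the lex-convex ordering gives $left(u)\le left(v)$; if $left(u)=left(v)$ then one neighbourhood would contain the other, contradicting the first item, so $left(u)<left(v)$. If in addition $right(v)\le right(u)$, then $left(u)<left(v)\le right(v)\le right(u)$ would again force $N(v)\subseteq N(u)$; hence $right(u)<right(v)$. Thus both $left$ and $right$ are strictly increasing along $D$ in the ordering. This already yields the outer inequalities $left(u)<left(v)$ and $right(u)+1\le right(v)$ of the second item.

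The substantive part of the second item, and the step I expect to be the main obstacle, is the middle inequality $left(v)\le right(u)+1$ for a consecutive pair $u=u_i$, $v=u_{i+1}$, which is what makes $N(u)\cup N(v)$ leave no gap. I would argue by contradiction: if $left(v)\ge right(u)+2$, then $y_{right(u)+1}$ is a genuine vertex of $Y$ lying strictly between the two intervals. Every $u_j$ with $j\le i$ satisfies $right(u_j)\le right(u_i)<right(u)+1$ by the monotonicity, so its interval ends before $y_{right(u)+1}$; every $u_j$ with $j\ge i+1$ satisfies $left(u_j)\ge left(u_{i+1})=left(v)>right(u)+1$, so its interval starts after $y_{right(u)+1}$. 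Hence no vertex of $D$ dominates $y_{right(u)+1}$, contradicting that $D$ dominates $Y$. Here it is essential that $u$ and $v$ are consecutive, so that the index ranges $j\le i$ and $j\ge i+1$ exhaust $D$.

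Finally, the third item I would prove as a redundancy argument. Suppose $D$ contained two vertices $v<v'$, both larger than $u$, each satisfying $left(u)<left(\cdot)\le right(u)+1\le right(\cdot)$. From $left(v')\le right(u)+1$ the union $N(u)\cup N(v')$ is the contiguous block $\{y_{left(u)},\ldots,y_{right(v')}\}$. The monotonicity gives $left(u)<left(v)$ and $right(v)<right(v')$, so $N(v)=\{y_{left(v)},\ldots,y_{right(v)}\}\subseteq\{y_{left(u)},\ldots,y_{right(v')}\}=N(u)\cup N(v')$. Thus every vertex dominated by $v$ is dominated by $u$ or $v'$, so $D\setminus\{v\}$ still dominates $Y$, contradicting minimality. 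In summary, the first and third items are minimality/redundancy arguments made transparent by the interval picture, while the only genuinely different step is the no-gap covering argument for the second item.
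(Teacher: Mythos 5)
Your proof is correct and takes essentially the same approach as the paper: minimality/redundancy arguments for the first and third items, and an undominated-gap argument for the second. The only cosmetic difference is that you spell out details the paper leaves terse (the explicit vertex $y_{right(u)+1}$ left undominated, and using monotonicity of $right(\cdot)$ to always remove the smaller vertex in item three, where the paper instead splits into two symmetric cases).
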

\begin{proof}

Let $D$ be an MCRD set of cardinality $k$ for the convex bipartite graph $G=(X,Y,E)$.
$N(v) \subseteq N(u)$ for some pair $u,v \in D$, i.e., $left(u) \leq left(v) \leq right(v) \leq right(u)$, implies $D\setminus\{v\}$ is a red dominating set of cardinality $k-1$, which contradicts the minimality of $D$.

Note that this implies that for all vertices $u,v \in D$, $u \neq v$, $left(u) \neq left(v)$ and $right(u) \neq right(v)$.

The union of the neighbourhood sets, $N(u) \cup N(v)$, of two consecutive vertices $u,v, u<v$ in a red dominating set in a convex bipartite graph whose vertices are in lex-convex ordering must consist of consecutive vertices in $Y$,i.e., $right(u)+1 \in N(v)$,  for otherwise some vertices in $Y$ would not be dominated. Therefore, $left(u) < left(v) \leq right(u)+1 \leq right(v)$ by the above paragraph.

Suppose there are vertices $u<v<w \in D$ such that $left(u) < left(v) \leq right(u)+1 \leq right(v)$ and $left(u) < left(w) \leq right(u)+1 \leq right(w)$. Then $N(u) \cup N(v) \subseteq N(u) \cup N(w)$ if $right(v) < right(w)$, and $N(u) \cup N(w) \subseteq N(u) \cup N(v)$ otherwise. In either case one of $v$ or $w$ may be removed from $D$ to produce a $k-1$ red dominating set, contradicting the minimality of $D$. $\square$\end{proof}

The enumeration algorithm operates in two stages. The first is a preprocessing labelling procedure that assigns labels to vertices of $X$. The second stage is a fast branch and bound search on vertices of $X$ that starts at a vertex adjacent to $y_{n_{Y}}$ with a particular label and proceeds backwards to a vertex adjacent to $y_1$. The preprocessing stage is presented in Algorithm~\ref{alg:bft}. Vertices adjacent to $y_1$ are assigned label $0$ and other vertices start with an initial label of $\infty$. Subsequently, a vertex $v$ is assigned label $a$ if the union of its neighbourhood set with that of a smaller vertex $u$ with label $a-1$ consists of consecutive vertices in $Y$, $N(v) \not\subseteq N(u)$, and $N(u) \not\subseteq N(v)$. The label $a$ will be the smallest such label for $v$. We use a FIFO queue to process the vertices. The algorithm can also be used to calculate the number of MCRD sets in linear time as we will prove. The array $Count$ is used for that purpose.

\begin{algorithm}
\caption{Assigning labels and counts to vertices of $X$ in the convex bipartite graph $G=(X,Y,E)$}
\hspace*{\algorithmicindent}\textbf{Input: }{The neighbourhood array $\mathcal N$ of the convex bipartite graph $G=(X,Y,E)$} 
\\
\hspace*{\algorithmicindent}\textbf{Output: }{Arrays $Label, Count, |Label|=|Count|=n_X$. $Label[x]$ stores the label of $x$. $Count[x]$, for $x \in N(y_{n_{Y}})$, stores the number of MCRD sets containing $x$.}
\begin{algorithmic}[1]

\STATE {$Label[x] \leftarrow \infty$, $\forall x \in X$;}
\STATE {$Count[x] \leftarrow 0$, $\forall x \in X$;}
\STATE {$enqueue(Queue,N(y_1))$;}\COMMENT{enqueue neighbours of $y_1$ into queue in order}
\STATE {$Label[x] \leftarrow 0$, $\forall x \in N(y_1)$;}
\STATE {$Count[x] \leftarrow 1$, $\forall x \in N(y_1)$;}
\WHILE {$Queue$ is not empty}\label{lin:bft-Q}
\STATE {$x \leftarrow dequeue(Queue)$;}
\newline
\COMMENT {the next loop labels vertices $x_j>x$ whose neighbourhood forms a union of consecutive vertices with $N(x)$, $N(x_j) \not\subseteq N(x)$,  $N(x) \not\subseteq N(x_j)$}
\FORALL {$x_j \in \{x':x'> x, x' \leq x_{n_{X}},left(x) < left(x') \leq right(x)+1 \leq right(x'), Label[x'] \geq Label[x] +1\}$}\label{lin:bft-xj}
\IF {$Label[x_j] > Label[x] +1$}
\STATE $enqueue(Queue,x_j);$ \COMMENT{enqueue vertex only if it is not in queue}
\ENDIF
\STATE {$Label[x_j] \leftarrow Label[x]+1$;}
\STATE {$Count[x_j] \leftarrow Count[x_j]+Count[x]$;}
\ENDFOR
\ENDWHILE
\end{algorithmic}
\label{alg:bft}
\end{algorithm}

Running Algorithm~\ref{alg:bft} on the graph in Figure~\ref{fig:convex_graph} results in the array $Label$ with values $[0,0,1,1,\infty,2,2,2,\infty]$ and the array $Count$ with values $[1,1,2,1,0,2,3,1,0]$.

\begin{theorem}
Algorithm~\ref{alg:bft} runs in $\mathcal O(n_{X}+|E|)$ time.
\end{theorem}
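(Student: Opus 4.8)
The plan is to split the running time into two parts: the constant‑time bookkeeping associated with the FIFO queue (lines 1--7) and the cumulative cost of the inner \textbf{for} loop on line~\ref{lin:bft-xj}. Initialization (lines 1--5) clearly costs $\mathcal O(n_X)$, since setting $Label$ and $Count$ and enqueuing $N(y_1)$ each touch a vertex once and $|N(y_1)|\le n_X$. So the whole argument reduces to (a) showing the \textbf{while} loop on line~\ref{lin:bft-Q} runs $\mathcal O(n_X)$ times with $\mathcal O(1)$ overhead per iteration apart from the inner loop, and (b) bounding the total number of pairs $(x,x_j)$ examined by the inner loop over the entire run by $\mathcal O(n_X+|E|)$.

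For (a) I would first establish the standard breadth‑first invariant that the labels of the vertices, read off in the order they are dequeued, form a non‑decreasing sequence. This follows by induction on the dequeue order from the FIFO discipline and the fact that a vertex $x$ with $Label[x]=a$ only ever assigns label $a+1$ to the vertices it enqueues, so the queue always holds labels in $\{a,a+1\}$. Granting this, a vertex $x_j$ is enqueued only when the guard $Label[x_j]>Label[x]+1$ forces its label strictly down to $Label[x]+1$; a second enqueue would require a later‑dequeued $x'$ with $Label[x']<Label[x]$, contradicting monotonicity. The members of $N(y_1)$ carry label $0$ and cannot be re‑enqueued either, since the guard demands a positive label. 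Hence there are at most $n_X$ enqueues and $n_X$ dequeues in total.

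Step (b) is where I expect the real difficulty to lie. Finding the candidate set is the easy part: because the vertices are in lex‑convex order, the candidates with $left(x)<left(x_j)\le right(x)+1$ form a contiguous block of the ordering, locatable from $\mathcal N$ and scannable in time proportional to its length. The delicate issue is that a single vertex $x_j$ may be examined many times, since the guard uses $\ge$ rather than strict inequality precisely so that every same‑layer predecessor can be accumulated into $Count[x_j]$. My first reduction would be the clean bound that $x_j$ is examined at most $deg(y_{left(x_j)-1})$ times: whenever $(x,x_j)$ is examined we have $left(x)\le left(x_j)-1\le right(x)$, so $y_{left(x_j)-1}\in N(x)$ because $N(x)$ is an interval, and distinct scanners $x$ contribute distinct edges $x\,y_{left(x_j)-1}$, all incident to the single vertex $y_{left(x_j)-1}$. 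Summing gives total inner‑loop work at most $\sum_{x_j} deg(y_{left(x_j)-1})$.

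The remaining and genuinely hard step is to bound that sum by $\mathcal O(|E|)$. Writing it as $\sum_q c_q\,deg(y_q)$, where $c_q$ is the number of vertices of $X$ whose leftmost neighbour is $y_{q+1}$, one has $\sum_q c_q\le n_X$ and $\sum_q deg(y_q)=|E|$, but the \emph{weighted} sum is linear only if the $c_q$ are suitably controlled, i.e.\ only if not too many $X$‑vertices share a leftmost neighbour. This is exactly the point where the convex and lex‑convex structure, together with the reprocessing behaviour induced by the $\ge$ guard, must be exploited; I would also have to charge the inspections of candidates rejected by the $right(x_j)\ge right(x)+1$ or label tests to the block endpoints so that this wasted scanning is absorbed into the same budget. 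Once this aggregate bound is in place, combining the $\mathcal O(n_X)$ initialization and queue cost with the $\mathcal O(n_X+|E|)$ inner‑loop cost yields the claimed $\mathcal O(n_X+|E|)$ running time.
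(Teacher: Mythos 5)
Your proposal is not a proof: it stops exactly at the step that needs proving, and you say so yourself. Everything before that point is sound --- your queue analysis in (a) is correct (each vertex of $X$ is enqueued and dequeued at most once), and your per-candidate bound is correct: whenever $x_j$ is scanned by $x$ we have $left(x)\le left(x_j)-1\le right(x)$, so each scan of $x_j$ consumes a distinct edge incident to $y_{left(x_j)-1}$, giving at most $deg(y_{left(x_j)-1})$ scans of $x_j$ in total. (This is the exact transpose of the paper's per-iteration bound, which charges the scan performed by a dequeued $x$ to the at most $deg(y_{right(x)+1})$ vertices adjacent to $y_{right(x)+1}$.) But the theorem is the claim that the total work is $\mathcal O(n_X+|E|)$, and your reduction of that claim to ``$\sum_q c_q\,deg(y_q)=\mathcal O(n_X+|E|)$'' is left entirely open, so the argument is genuinely incomplete.

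You should know, however, that the step you balked at cannot be carried out, because the missing inequality --- and in fact the stated theorem --- is false. Take $Y=\{y_1,y_2\}$ and let $X$ consist of $m$ vertices with neighbourhood $\{y_1\}$, $p$ vertices with neighbourhood $\{y_2\}$, and one vertex $z$ adjacent to both; this is a connected convex bipartite graph with $n_X+|E|=\Theta(m+p)$. All $m+1$ vertices meeting $y_1$ receive label $0$ and are dequeued, and for each of the $m$ degree-one ones the set built in line~\ref{lin:bft-xj} is exactly the $p$ vertices with neighbourhood $\{y_2\}$: the guard $Label[x']\ge Label[x]+1$ admits them on \emph{every} pass, and it must, since each pass performs $Count[x_j]\leftarrow Count[x_j]+Count[x]$ and this repeated accumulation is precisely what makes $Count$ correct (compare $Count[x_7]=3$ in the paper's worked example). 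Hence the loop body executes $mp$ times, which is $\Theta\bigl((n_X+|E|)^2\bigr)$ when $m=p$. This is also exactly the point where the paper's own proof is unsound: ``since each vertex in $X$ is visited at most once, \dots this step is executed at most $|E|$ times'' conflates being dequeued once with being scanned once; dequeuing is unique, but the charging $x\mapsto y_{right(x)+1}$ is not injective (in the example all $m$ vertices share $right(x)=y_1$), so the degrees $deg(y_{right(x)+1})$ do not sum to $|E|$. So your instinct about where the real difficulty lies was correct, but no completion exists for Algorithm~\ref{alg:bft} as written; a linear-time bound could only be rescued under additional hypotheses (for instance, merging vertices with identical neighbourhoods and, more importantly, aggregating the count updates per label class instead of per pair), and that would require both a modified algorithm and a different argument.
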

\begin{proof}
The set built in line~\ref{lin:bft-xj} will be of size at most $deg(right(x)+1)$. Adding up those sets over all iterations and since each vertex in $X$ is visited at most once, we can see that this step is executed at most $|E|$ times. Since all other operations in the  while loop take constant time and operations outside the loop take at most $\mathcal O(n_{X})$ time, therefore, Algorithm~\ref{alg:bft} runs in $\mathcal O(n_{X}+|E|)$ time. $\square$\end{proof}

Let $k$ be the cardinality of an MCRD for a given convex bipartite graph $G=(X,Y,E)$. 
We prove the following properties for the output of Algorithm~\ref{alg:bft}.
\begin{lemma}
The following is true for the output of Algorithm~\ref{alg:bft}.
\label{lem:labelling}
\begin{enumerate}[label=(\alph*)]
\item Each vertex adjacent to $y_1$ retains a label of $0$, and each other vertex has a label larger than $0$. \label{sta:a}
\item If $u<v$ and $Label[u] \neq \infty$ then $Label[u] \leq Label[v]$. \label{sta:b}
\item Once a vertex is assigned a label $< \infty$, its label is not modified. \label{sta:c}
\item If $left(u)=left(v)$ and both $u, v$ have labels $\neq \infty$, then $Label[u]=Label[v]$.\label{sta:e}
\item A vertex $v$ is assigned label $a > 0$ if and only if there is a vertex $u$ whose label is $a-1$, $u<v$, $N(v) \cup N(u)$ consists of consecutive vertices,  $N(v) \not \subseteq N(u)$, and $N(u) \not \subseteq N(v)$. \label{sta:d}

\end{enumerate}
\end{lemma}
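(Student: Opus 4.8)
The plan is to read Algorithm~\ref{alg:bft} as a breadth-first layering on the digraph in which a vertex $u$ points to a vertex $v$ exactly when $u<v$ and $left(u)<left(v)\le right(u)+1\le right(v)$ (the guard in line~\ref{lin:bft-xj}); I will call such a $u$ a \emph{predecessor} of $v$, and note that this relation is precisely the neighbourhood condition in statement (e). The backbone of everything is a single queue invariant: at every moment the finite labels of the queued vertices are non-decreasing from front to back and span at most two consecutive integers, and each vertex is enqueued, hence dequeued, at most once. The enqueue guard $Label[x_j]>Label[x]+1$ and the selection guard $Label[x']\ge Label[x]+1$ are exactly what make this a textbook BFS, so I would prove the invariant by induction on the number of dequeue operations.

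Statements (a) and (c) then fall out directly. For (a): the initialisation assigns label $0$ to precisely $N(y_1)=\{x:left(x)=y_1\}$, every assignment inside the loop produces $Label[x]+1\ge1$, and a vertex of label $0$ can never re-enter the selection set because the guard forces $Label[x']\ge Label[x]+1\ge1$; hence the label-$0$ vertices are exactly $N(y_1)$ and all others keep a label $>0$. For (c): when $v$ first receives a finite label $\ell$ it is set by dequeuing a predecessor $u$ with $Label[u]=\ell-1$ (or $v\in N(y_1)$, $\ell=0$); by the invariant every later dequeued vertex has label $\ge\ell-1$, the selection guard forces any vertex that touches $v$ again to have label exactly $\ell-1$, so the re-assignment $Label[v]\leftarrow\ell$ leaves the value unchanged, and the enqueue guard prevents a second enqueue.

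For (b) I would prove, by induction on the label value $a$, that every finite-labelled $u<v$ with $Label[v]=a$ satisfies $Label[u]\le a$, folding the same-$left$ reasoning needed later for (d) into this induction to avoid circularity. The base $a=0$ uses $left(x)=y_1\iff x\in N(y_1)$. In the step, let $p<v$ be the predecessor that labelled $v$, so $Label[p]=a-1$ and $left(p)<left(v)\le right(p)+1\le right(v)$; if $u\le p$ the inductive hypothesis on $p$ gives $Label[u]\le a-1$, while if $p<u<v$ the interval relations and $left(u)\le left(v)$ let me exhibit a predecessor of $u$ of label $\le a-1$. The one nontrivial sub-case, $N(u)\subseteq N(p)$, is where the lex-convex interval bookkeeping and the non-containment facts of Lemma~\ref{lem:dom-neighb} are actually used. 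Statement (d) is then a clean corollary: if $left(u)=left(v)$ with $u<v$ (so $right(u)\le right(v)$), every predecessor of $u$ is also a predecessor of $v$, since the guard only tightens through the $right$-coordinate, so $Label[v]\le Label[u]$, and (b) supplies the reverse inequality.

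Finally, for (e), the forward direction is immediate from the mechanism together with (c): if $Label[v]=a>0$ then $v$ received its label while some predecessor $u$ with $Label[u]=a-1$ was dequeued, and the loop guard is exactly the stated condition. For the reverse direction, dequeuing a predecessor $u$ with $Label[u]=a-1$ already yields $Label[v]\le a$; the remaining content is the equality $Label[v]=a$, which holds precisely when $a-1$ is the \emph{smallest} label among predecessors of $v$ (so that the BFS layering assigns $v$ the value $a$ and, by (c), never lowers it). \textbf{Establishing this minimality is the main obstacle.} By (b) any predecessor $w$ of strictly smaller label must satisfy $w<u<v$, and one must argue from the interval relations that no such $w$ coexists with $u$; this is the delicate point, and it is exactly where a naive reading of the equivalence is fragile. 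I would therefore either derive the minimality directly from the lex-convex structure and Lemma~\ref{lem:dom-neighb}, or else record the minimal-predecessor condition explicitly in the reverse direction, and I expect this step — rather than the bookkeeping behind (a)--(d) — to be the crux of the lemma.
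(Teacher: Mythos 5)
Your reading of Algorithm~\ref{alg:bft} as a multi-source BFS on the predecessor digraph (edges $u\to v$ when $u<v$ and $left(u)<left(v)\le right(u)+1\le right(v)$, sources $N(y_1)$) is sound, and it correctly yields parts (a), (c) and the forward half of (e), much as the paper's index-induction does. But the two steps you yourself flag as delicate cannot be closed, because the claims they are meant to establish are false. Take $Y=\{y_1,\dots,y_{20}\}$ and five vertices of $X$ with neighbourhood intervals $(1,2),(1,10),(3,5),(5,6),(5,20)$, which is a connected convex bipartite graph already in lex-convex order $x_1<\dots<x_5$. The algorithm outputs $Label=[0,0,1,2,1]$: $x_3$ is reachable only from $x_1$, $x_4$ only from $x_3$, while $x_5$ is hit directly from the source $x_2$. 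Then $x_4<x_5$ with $Label[x_4]=2>1=Label[x_5]$, refuting (b); $left(x_4)=left(x_5)$ with unequal finite labels, refuting (d); and $u=x_3$, $v=x_5$ satisfy every neighbourhood condition in (e) with $Label[x_3]=1$, yet $x_5$ is assigned $1$, not $2$, refuting the ``if'' direction of (e). In your induction for (b) this is precisely the configuration $p=x_2<u=x_4<v=x_5$ with $N(u)\subseteq N(p)$: the only predecessor of $x_4$ is $x_3$, whose label is $1$, so no predecessor of label $\le a-1=0$ exists to be exhibited, and neither Lemma~\ref{lem:dom-neighb} (which concerns MCRD sets, not labels) nor any interval bookkeeping can produce one.

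For comparison, the paper's own proof does not survive this example either: its induction for (b) really only establishes that a label, once assigned, is never changed (i.e., part (c)), and then asserts $Label[x_p]\ge Label[x]$ for all $x<x_p$ without justification; and its proof of the reverse direction of (e) silently adds the hypothesis that $v$'s label ``has not been modified from $\infty$,'' which is absent from the statement and is exactly what fails for $x_5$ above. What is true, and what the downstream results (Lemma~\ref{lem:D-iff}, Theorem~\ref{the:counting}, Algorithm~\ref{alg:enumeration}) actually need, are the facts your framework does prove: (a), (c), the forward direction of (e), and the observation that $Count[v]$ is incremented by $Count[u]$ exactly for those $u<v$ meeting the interval conditions with $Label[u]=Label[v]-1$. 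So your instinct that the minimality step is the crux was correct, but the resolution is not a cleverer argument: parts (b), (d) and the reverse half of (e) must be weakened (e.g., restrict the reverse direction of (e) to vertices $v$ having no predecessor of label smaller than $a-1$) before any proof can succeed.
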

\begin{proof}
The first statement is easy to see. It is also easy to note that at the start of each iteration of the while loop the queue will contain only vertices with labels smaller than $ \infty$.

We prove statement~\ref{sta:b} by induction on the index of vertices in $X$. It is true for $x \in N(y_{1})$. Assume it is true for all $x < x_p$, $1 < p \leq n_{X}$. The first time the label of $x_p, x_p \not \in N(y_{1})$, is modified is when it is first added to the set built in line~\ref{lin:bft-xj}. Let $x_p$ be assigned label $a$ for the first time during an iteration when $x=x_i < x_p$, $Label[x_i]=a-1$. For $Label[x_p]$ to change, it must be added to a set built in line~\ref{lin:bft-xj} at an iteration when $x=x'$, $x_i < x' < x_p$.  Then  $Label[x_p] =a \geq Label[x']+1$ and $Label[x_i] \leq Label[x']$ by the induction hypothesis, imply $Label[x']=a-1$. Hence $Label[x_p]$ will not change, and $Label[x_p] \geq Label[x], \forall x<x_{p}$. Therefore, by mathematical induction, if $u<v$ and $Label[u] \neq \infty$ then $Label[u] \leq Label[v]$. 

The construct above also proves statement~\ref{sta:c}.

Let $u < v$ and $left(u)=left(v)$, then $N(u) \subseteq N(v)$. Suppose they don't have the same labels, then $N(u) \subset N(v)$, i.e., $right(u) < right(v)$, and $Label[u]=a < Label[v]$, by statement~\ref{sta:b}. This means that $u$ is added to the set built in line~\ref{lin:bft-xj} when $x=x_i$ and $Label[x_i]=a-1$. $left(x_i) < left(u)=left(v) \leq right(x)+1 \leq right(u) < right(v)$ and $Label[x_i]+1 \leq Label[u] < Label[v]$ imply that $v$ too is added to that set during the same iteration and will receive label $a$. By statement~\ref{sta:c}, the label of $v$ will not change. This contradicts our supposition. Therefore, $left(u)=left(v)$ and both $u, v$ have labels $\neq \infty$ implies $Label[u]=Label[v]$. This proves statement~\ref{sta:e}.

A vertex $v$ is assigned label $a$ when it is added to the set in line~\ref{lin:bft-xj}, when $x=u<v$ has label $a-1$ and $left(u) < left(v) \leq right(u)+1 \leq right(v)$. $N(v) \cup N(u)$ consists of consecutive vertices and $N(v) \not\subseteq N(u)$ because 
\newline 
$right(u)+1 \in N(v)$. $N(u) \not\subseteq N(v)$ 
because $left(u) < left(v)$. Since the label of $v$ will not change by statement~\ref{sta:c}, this proves the if-then part ($\Longrightarrow$ direction) in statement~\ref{sta:d}.

If there is a vertex $u$ with label $a-1$, and a vertex $v >u$, whose label has not been modified from $\infty$, such that $N(v) \cup N(u)$ consists of consecutive vertices,  $N(v) \not \subseteq N(u)$,  and $N(u) \not \subseteq N(v)$, then $left(u) < left(v) \leq right(u)+1 \leq right(v)$ and $v$ will be added to the set in line~\ref{lin:bft-xj} when $x=u$ and will receive a label of $a$.
This concludes the proof for statement~\ref{sta:d} and the lemma. $\square$\end{proof}

\begin{lemma}
Let $D=\{u_1,u_2,\ldots,u_k\}, u_1 < u_2, \ldots,< u_k$, be a subset of $X$ for the convex bipartite graph $G=(X,Y,E)$, and $Label$ be the output of Algorithm~\ref{alg:bft}. $D$ is an MCRD set if and only if all of the following is true.
\begin{itemize}
\item $u_k$ is adjacent to $y_{n_Y}$
\item $Label[u_i]=i-1$ for all $u_i \in D$
\item $k-1=\min_{x \in N(y_{n_Y})} Label[x]$, i.e., it is the smallest label of any vertex adjacent to $y_{n_Y}$
\item for any consecutive vertices $u_i,u_{i+1} \in D$, $i=1,\ldots,k-1$, $left(u_{i}) \leq left(u_{i+1})-1 \leq right(u_{i}) < right(u_{i+1})$
\end{itemize}
\label{lem:D-iff}
\end{lemma}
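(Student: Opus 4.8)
The plan is to prove the two directions of the biconditional, leaning on the structural facts in Lemma~\ref{lem:dom-neighb} and the labelling invariants in Lemma~\ref{lem:labelling}. The conceptual tool I would set up first is a semantic reading of the labels: by backtracking through statement~\ref{sta:d}, a finite value $Label[x]=a$ is equivalent to the existence of a chain $w_0<w_1<\cdots<w_a=x$ with $w_0\in N(y_1)$ in which every consecutive pair $w_j,w_{j+1}$ satisfies $left(w_j)<left(w_{j+1})\le right(w_j)+1\le right(w_{j+1})$, with $a$ minimal among all such chains ending at $x$. The payoff of this reading is that such a chain uses $a+1$ vertices and dominates exactly $y_1,\ldots,right(x)$, so labels measure the cheapest way to cover a prefix of $Y$ ending at a given vertex. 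I would record this as the workhorse observation before touching either direction.

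For the forward direction I assume $D$ is an MCRD set and verify the four bullets in turn. Domination forces some vertex of $D$ to cover $y_{n_Y}$; using the lex-convex ordering together with the fact (noted after Lemma~\ref{lem:dom-neighb}) that distinct members of $D$ have distinct $right$ values, I would show this vertex must be $u_k$, since otherwise a smaller vertex would contain $N(u_k)$, contradicting the first bullet of Lemma~\ref{lem:dom-neighb}; this gives bullet~1, and the symmetric argument at $y_1$ gives $u_1\in N(y_1)$. The fourth bullet is a direct restatement of the second bullet of Lemma~\ref{lem:dom-neighb}. For bullet~2 I prove $Label[u_i]=i-1$ by two-sided bounding: the upper bound $Label[u_i]\le i-1$ comes from feeding the chain $u_1,\ldots,u_i$ (valid by bullet~4) into statement~\ref{sta:d} inductively, starting from $Label[u_1]=0$ via statement~\ref{sta:a}; the lower bound $Label[u_i]\ge i-1$ is the exchange step, where if $Label[u_i]=b<i-1$, the semantic reading yields a chain of $b+1$ vertices covering $y_1,\ldots,right(u_i)$, and splicing it with $u_{i+1},\ldots,u_k$ (legal because $u_i$ overlaps $u_{i+1}$ by bullet~4) produces a red dominating set of size $b+1+(k-i)<k$, contradicting minimality. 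The same exchange idea gives bullet~3: $u_k\in N(y_{n_Y})$ with $Label[u_k]=k-1$ bounds the minimum above, while any $z\in N(y_{n_Y})$ with a strictly smaller label would, through its chain, be a red dominating set smaller than $k$.

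For the converse I assume the four bullets and show $D$ is an MCRD set. Domination is immediate: $Label[u_1]=0$ forces $u_1\in N(y_1)$ by statement~\ref{sta:a}, bullet~4 guarantees the neighbourhoods of consecutive $u_i$ leave no gap, and bullet~1 pushes coverage out to $y_{n_Y}$, so $D$ dominates $Y$. Minimality is then a counting statement rather than a fresh argument: the forward direction has already shown that every MCRD set has cardinality exactly $1+\min_{x\in N(y_{n_Y})}Label[x]$, so bullet~3 says $|D|=k$ equals this minimum value, and a red dominating set whose size meets the minimum is by definition minimum, hence $D$ is an MCRD set.

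I expect the exchange argument behind bullet~2's lower bound to be the crux. The upper bound and the domination claims are bookkeeping on the lex-convex ordering, but the lower bound is where minimality of $D$ genuinely enters: it requires knowing that a cheaper label truly corresponds to a concrete, realizable shorter chain that can be grafted onto the tail $u_{i+1},\ldots,u_k$ without creating coverage gaps. Making the semantic reading of labels precise enough that this grafting is airtight — in particular that the spliced set still dominates every $y\in Y$ — is the step I would spend the most care on; once it is in hand, bullet~3 and the entire converse follow with little extra work.
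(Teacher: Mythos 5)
Your proof is correct and takes essentially the same route as the paper's: the forward direction derives the four bullets from Lemma~\ref{lem:dom-neighb}, the labelling properties of Lemma~\ref{lem:labelling}, and minimality, while the converse checks domination directly and obtains minimality by applying the forward direction to a genuine MCRD set (so that bullet three forces $k$ to equal the minimum cardinality). Your only departure is one of added rigor rather than of method: where the paper obtains $Label[u_i]=i-1$ by citing statements~\ref{sta:a} and~\ref{sta:d} of Lemma~\ref{lem:labelling} directly, you make explicit the interpretation of labels as shortest overlap-chain lengths and prove the equality by a two-sided bound with an exchange argument, which is a careful unpacking of the same step.
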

\begin{proof}

$\Longrightarrow$: By the definition of red domination and the lex convex ordering, $u_1$ is adjacent to $y_1$ and $u_k$ is adjacent to $y_{n_Y}$. By statement~\ref{sta:a}  and statement~\ref{sta:d} in Lemma~\ref{lem:labelling}, $Label[u_1]=0$ and $Label[u_i]=i-1$. By the minimality of $D$, $k-1=\min_{x \in N(y_{n_Y})} Label[x]$. By Lemma~\ref{lem:dom-neighb}, for consecutive vertices $u_i$ and $u_{i+1}$, $left(u_i) < left(u_{i+1}) \leq right(u_{i})+1 \leq right(u_{i+1})$. Therefore, $left(u_i) \leq left(u_{i+1})-1 \leq right(u_{i})< right(u_{i+1})$.

$\Longleftarrow$: If a set $D=\{u_1,\ldots,u_k\} \subseteq X$ satisfies the conditions, then it is easy to see that it is a red dominating set. If it is not an MCRD set then $k>p$, where $p$ is the cardinality of an MCRD set of $G$. Then there is a vertex $x \in N(y_{n_{Y}})$ with a label of $p-1$ by the first part of this proof. This contradicts the definition of $k$ in the hypothesis. $\square$\end{proof}

\begin{theorem}
The number of MCRD sets for a graph $G=(X,Y,E)$ is equal to $\sum_{x \in N(y_{n_Y})} Count[x]$, where  $Count$ is the output of Algorithm~\ref{alg:bft}.
\label{the:counting}
\end{theorem}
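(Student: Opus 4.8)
The plan is to first show that the array $Count$ records, for every vertex $x$, the number of \emph{valid chains} ending at $x$, and then to invoke Lemma~\ref{lem:D-iff} to put these chains in bijection with MCRD sets. Call a sequence $w_0 < w_1 < \cdots < w_a = x$ of vertices of $X$ a valid chain for $x$ if $Label[w_i]=i$ for each $i$ and every consecutive pair satisfies $left(w_i) < left(w_{i+1}) \le right(w_i)+1 \le right(w_{i+1})$, i.e.\ $N(w_i)\cup N(w_{i+1})$ is an interval of $Y$ with neither neighbourhood contained in the other. The central claim I would establish is that $Count[x]$ equals the number of valid chains ending at $x$.

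I would prove this claim by induction on $Label[x]$, following the order in which the while loop dequeues vertices. In the base case, a vertex with $Label[x]=0$ is, by statement~\ref{sta:a} of Lemma~\ref{lem:labelling}, exactly a neighbour of $y_1$; it admits the single valid chain $(x)$ and is initialised with $Count[x]=1$. For the inductive step, fix $x_j$ with $Label[x_j]=a>0$. By statement~\ref{sta:d} of Lemma~\ref{lem:labelling}, appending $x_j$ sets up a bijection between the valid chains ending at $x_j$ and the valid chains ending at some predecessor $w<x_j$ with $Label[w]=a-1$ for which $(w,x_j)$ is a consecutive pair; hence the number of valid chains ending at $x_j$ is the sum, over exactly those predecessors $w$, of the number of valid chains ending at $w$. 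It remains to check that the algorithm accumulates precisely this sum: the set built in line~\ref{lin:bft-xj} enumerates, as $x=w$ ranges over the dequeued vertices, exactly the $x_j$ for which $w$ is such a predecessor, and the subsequent update $Count[x_j]\leftarrow Count[x_j]+Count[w]$ adds in $Count[w]$. Two facts keep this bookkeeping exact: by statement~\ref{sta:c}, $Count[w]$ is already final when $w$ is dequeued, since its label is frozen; and by statement~\ref{sta:b} the labels are monotone along $X$, so every predecessor $w$ of $x_j$ is dequeued before $x_j$ and contributes to $Count[x_j]$ exactly once. Combined with the induction hypothesis, this yields the claim for $x_j$.

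With the claim in hand, Lemma~\ref{lem:D-iff} completes the identification. It states that $D=\{u_1<\cdots<u_k\}$ is an MCRD set if and only if $u_k\in N(y_{n_Y})$, $Label[u_i]=i-1$ for all $i$, the consecutive-pair condition holds, and $k-1=\min_{x\in N(y_{n_Y})}Label[x]$. The first three conditions say precisely that $D$, listed in increasing order, is a valid chain ending at the vertex $u_k\in N(y_{n_Y})$, while the fourth pins that endpoint's label to the minimum value $k-1$. Hence the MCRD sets are in bijection with the valid chains terminating at vertices $x\in N(y_{n_Y})$ with $Label[x]=k-1$, and by the claim their number is $\sum_{x\in N(y_{n_Y}),\,Label[x]=k-1}Count[x]$.

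The step I expect to be the main obstacle, and would treat most carefully, is reconciling this restricted sum with the sum over \emph{all} of $N(y_{n_Y})$ in the statement. A neighbour $x$ of $y_{n_Y}$ with $Label[x]>k-1$ is the endpoint of a valid chain of length exceeding $k$, i.e.\ of a red dominating set that is not minimum; if such a vertex carried a positive count it would inflate the full sum. The crux is therefore to argue that every neighbour of $y_{n_Y}$ receiving a finite label in fact receives the label $k-1$. Since all $x\in N(y_{n_Y})$ satisfy $right(x)=n_Y$, for any two such vertices $x<x'$ one has $N(x')\subseteq N(x)$, and I would try to leverage this together with the monotonicity of statement~\ref{sta:b}, the equal-$left$ property of statement~\ref{sta:e}, and the no-containment principle of Lemma~\ref{lem:dom-neighb} to rule out a strictly larger label on a neighbour of $y_{n_Y}$. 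If this cannot be secured in full generality, the safe reading is that the number of MCRD sets equals $\sum_{x\in N(y_{n_Y}),\,Label[x]=k-1}Count[x]$, and the sum in the theorem should be understood as restricted to the minimum-label neighbours of $y_{n_Y}$.
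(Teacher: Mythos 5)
Your route differs from the paper's, and it is in fact the sounder of the two. The paper argues by induction on the index of $x_p\in X$ with the invariant that $Count[x_p]$ equals the number of MCRD sets containing $x_p$ in the prefix subgraph $H_p$ induced by $\{x\in X: x\le x_p\}$ and its neighbourhood, supported by a claim that the prefixes $D_i$ of an MCRD set of $G$ are MCRD sets of the $H_i$; from this it concludes the ``more general statement'' that $Count[u]$ equals the number of MCRD sets of $G$ containing $u$ for \emph{every} $u\in N(y_{n_Y})$, which is what makes the unrestricted sum come out. Your invariant --- $Count[x]$ equals the number of valid chains ending at $x$ --- is purely about the algorithm and never conflates chains with minimum sets, and your passage from chains to MCRD sets via Lemma~\ref{lem:D-iff} is correct. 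One point to tighten: the finality of $Count[w]$ at the moment $w$ is dequeued is really a consequence of the FIFO discipline (vertices are dequeued in non-decreasing label order, so all label-$(a-2)$ contributors to a label-$(a-1)$ vertex are processed before it), not of statements~\ref{sta:b} and~\ref{sta:c} by themselves; that BFS property needs a line of proof.

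The obstacle you flagged is not a weakness of your approach: the theorem as stated is false, and your restricted sum is the correct statement. Take $Y=\{y_1,\dots,y_6\}$, $X=\{x_1,x_2,x_3,x_4\}$ with $N(x_1)=\{y_1,y_2\}$, $N(x_2)=\{y_2,\dots,y_6\}$, $N(x_3)=\{y_3,y_4,y_5\}$, $N(x_4)=\{y_5,y_6\}$; this graph is connected, convex, and listed in lex-convex order. Algorithm~\ref{alg:bft} produces $Label=[0,1,1,2]$ and $Count=[1,1,1,1]$. Here $N(y_6)=\{x_2,x_4\}$, so $\sum_{x\in N(y_6)}Count[x]=2$, yet the unique MCRD set is $\{x_1,x_2\}$ (so $k=2$ and the true number is $1$). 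The excess unit is $Count[x_4]$, which counts the valid chain $(x_1,x_3,x_4)$ --- a red dominating set of size $3$, not a minimum one. The same example refutes the paper's ``more general statement'' (no MCRD set contains $x_4$, yet $Count[x_4]=1$) and breaks its inductive invariant (the chain $(x_1,x_3)$ counted by $Count[x_3]$ does not even dominate $H_3$, whose $Y$-side includes $y_6$): the paper's prefix claim only goes one way, and the converse direction needed by the induction fails. So your hedged conclusion is exactly right: what is true, and what your argument proves, is that the number of MCRD sets equals $\sum_{x\in N(y_{n_Y}),\,Label[x]=k-1}Count[x]$, and the sum in Theorem~\ref{the:counting} must be read as restricted to the minimum-label neighbours of $y_{n_Y}$.
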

\begin{proof}

By Lemma~\ref{lem:D-iff}, it suffices to prove that the number of MCRD sets containing $u$ for each $u \in \{u: u \in N(y_{n_{Y}}), Label[u]=k-1\}$ is equal to  $Count[u]$.
We will prove the more general statement that the number of MCRD sets containing  $u$ for each $u \in N(y_{n_{Y}})$ is equal to  $Count[u]$. We will first prove the following claim.
\begin{claim}
If $D=\{u_1,.u_2,\ldots,u_k\}, u_1 < u_2, \ldots,u_k$, is an MCRD set in $G$, then each $D_i \subseteq D$, where $D_i=\{u_1,\ldots,u_i\}$, is an MCRD set in the connected subgraph $H_i$ induced by $\{x: x \in X, x \leq u_i\} \cup \{y: y \in N(x), x \leq u_i\}$.
\end{claim}
\begin{proof}
If not then there is a set $D' \subseteq X, D'=\{v_1,\ldots,v_j\}, v_1 < \ldots,<v_j \leq u_i$, such that $|D'| =j < i$ and $D'$ is an MCRD set in $H_i$.
Then $v_j $ is adjacent to $right(u_i)$  and, therefore, $N(u_i) \subseteq N(v_j)$, and $\{v_1,\ldots,v_j , u_{i+1},\ldots,u_k\}$ is a red dominating set for $G$ of a smaller cardinality than $D$ which contradicts the minimality of $D$. $\square$\end{proof}

We use induction on the index of a vertex $x_i \in X$ to prove that for all $x_i \in X$, $Count[x_i]$ equals the number of MCRD sets in $H_i$, where $H_i$ is defined as in the above claim.  
The hypothesis is true for all $x \in N(y_1)$, because $Count[x]=1$ for all $x \in N(y_1)$.
Assume the hypothesis is true for all $x < x_p$, $1 < p \leq n_{X}$. 
Let $D$ be any MCRD set that contains $x_p$, $D=\{u_1,\ldots,u_{p-1},u_p=x_p,\ldots,u_{k}\}$. By the above claim, each $D_i=\{u_1,\ldots,u_i\}, 1 \leq i \leq k$, is an MCRD set in the connected subgraph $H_i$.
The number of MCRD sets that contain $x_p$ in the connected subgraph $H_p$
is equal to the sum, over all $i$, of the number of MCRD sets $D_i=\{u_1, u_2, \ldots, u_i\}$ where $u_i < x_p$, $left (u_i) \leq left(x_p)-1 \leq right(u_i) < right(x_p)$, and $Label[x_p]=Label[u_i]+1$ by Lemma~\ref{lem:D-iff}. 
By the induction hypothesis, the number  of such sets, for each $i$, is equal to $Count[u_i]$. 
For all such vertices $u_i$ where $left (u_i) < left(x_p)  \leq right(u_i) +1 \leq right(x_p)$ and $Label[x_p]=Label[u_i]+1$, $x_p$ will be added to the set built in line~\ref{lin:bft-xj} in Algorithm~\ref{alg:bft}, when $x=u_i$. Hence $Count[x_p]$ is equal to the number of MCRD sets in the connected subgraph $H_p$.

Therefore, by mathematical induction, $Count[x]$ is equal to the number of MCRD sets that contain $x$, for all $x \in N(y_{n_{Y}})$. $\square$\end{proof}

\begin{corollary}
The number of MCRD sets for an arbitrary convex bipartite graph can be computed in $\mathcal O(n_{X}+|E|)$ time.
\end{corollary}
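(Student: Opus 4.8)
The plan is to assemble the corollary directly from two results already in hand: the theorem establishing that Algorithm~\ref{alg:bft} runs in $\mathcal O(n_X+|E|)$ time, and Theorem~\ref{the:counting}, which equates the total number of MCRD sets with $\sum_{x \in N(y_{n_Y})} Count[x]$. The strategy is simply to account for every phase of the computation and verify that each stays within a linear budget.

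First I would handle the preprocessing that puts the input into the form Algorithm~\ref{alg:bft} expects. Beginning from $G=(X,Y,E)$, a convex ordering is obtained in $\mathcal O(n+|E|)$ time by the PQ-tree method of~\cite{BoothLue76}; from it the neighbourhood array $\mathcal N$ is filled in $\mathcal O(n+|E|)$ time by the left-to-right sweep described in the Definitions section, and a lex-convex ordering then follows from an $\mathcal O(n)$ LSD radix sort on the keys $left(x)\,right(x)$. Hence the representation on which Algorithm~\ref{alg:bft} operates is produced in $\mathcal O(n+|E|)$ time.

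Next I would invoke the running-time theorem for Algorithm~\ref{alg:bft} to obtain $Label$ and $Count$ in $\mathcal O(n_X+|E|)$ time, leaving only the final aggregation. By Theorem~\ref{the:counting} the answer is $\sum_{x \in N(y_{n_Y})} Count[x]$, so I would scan $\mathcal N$ once and accumulate $Count[x]$ over exactly those $x$ with $right(x)=n_Y$, using that a vertex is adjacent to $y_{n_Y}$ precisely when its rightmost neighbour is $y_{n_Y}$. This pass touches each vertex of $X$ at most once and so costs $\mathcal O(n_X)$. Summing the three phases yields the claimed bound.

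I do not expect a genuine obstacle, since the statement is a bookkeeping corollary of machinery already built; the points deserving care are two. First, extracting $N(y_{n_Y})$ and summing must be confirmed linear rather than hidden quadratic work, which the characterisation $x \in N(y_{n_Y}) \iff right(x)=n_Y$ settles by reducing the last step to one linear pass over $\mathcal N$. Second, the $n_Y$ appearing in the preprocessing bound must be absorbed into the stated bound: as $G$ is connected every $y \in Y$ has degree at least one, so $n_Y \leq |E|$ and $\mathcal O(n+|E|)$ coincides with $\mathcal O(n_X+|E|)$.
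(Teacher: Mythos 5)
Your proposal is correct and follows exactly the route the paper intends: the corollary is stated there as an immediate consequence of the $\mathcal O(n_X+|E|)$ running-time theorem for Algorithm~\ref{alg:bft} together with Theorem~\ref{the:counting}, which is precisely your argument. Your additional bookkeeping (building $\mathcal N$ and the lex-convex ordering in linear time, identifying $N(y_{n_Y})$ via $right(x)=n_Y$, and absorbing $n_Y$ into $|E|$ by connectivity) only makes explicit what the paper leaves implicit.
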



To enumerate all MCRD sets, we perform the second stage of the enumeration process. Algorithm~\ref{alg:enumeration} presented below, performs a branch and bound search on vertices of $X$ starting from a vertex with label $k-1$ that is adjacent to $y_{n_{Y}}$ going backwards up to a vertex with label $0$. In the process we only add to the current output $D$ vertices whose neighbourhood sets together with that of the previous vertex in $D$ make a set of consecutive vertices in $Y$ and are not contained in each other. Once we reach label $0$, we output the set. As we will show, unlike general recursive backtracking techniques, the branch and bound algorithm does not generate failed partial solutions.
Applying the algorithm to the graph in Figure~\ref{fig:convex_graph}, outputs the sets $\{x_8,x_4,x_2\}$, $\{x_7,x_4,x_2\}$, $\{x_7,x_3,x_2\}$, and $\{x_7,x_3,x_1\}$.

\begin{algorithm}
\caption{Enumerating all MCRD sets for the convex bipartite graph $G=(X,Y,E)$}
\hspace*{\algorithmicindent} \textbf{Input: }{The neighbourhood array $\mathcal N$ of the convex bipartite graph $G=(X,Y,E)$, array $Label$, $k-1$: as defined in Lemma~\ref{lem:D-iff}} 
\newline
\hspace*{\algorithmicindent} \textbf{Output: }{All MCRD sets of $G$} 
\begin{algorithmic}[1]
\FORALL {$x$ such that $x \in \{ x': x' \in N(y_{n_{Y}}), Label[x']=k-1\}$}\label{lin:y1}
\STATE {$D \leftarrow \{x\}$;}
\STATE {GENERATE$(D,x)$;}\label{lin:enum-D}
\ENDFOR
\STATE {GENERATE($D,x$)} \COMMENT {recursive function}\label{fun:generate}
\IF {$Label[x]=0$}
\STATE output $D$;
\RETURN
\ENDIF
\newline
\COMMENT {the next loop finds all candidate vertices to extend the MCRD set}
\FORALL {$x_j \in\{x': x' < x, left(x') \leq left(x) -1 \leq right(x')<right(x), Label[x']=Label[x]-1\}$}\label{lin:enum-xj}
\STATE $D \leftarrow D \cup \{x_j\}$;
\STATE {GENERATE$(D,x_j)$;}\label{lin:rec-D}
\ENDFOR
\RETURN
\end{algorithmic}
\label{alg:enumeration}
\end{algorithm}

\begin{theorem}
Algorithm~\ref{alg:enumeration} correctly enumerates all MCRD sets in linear space and linear delay.
\end{theorem}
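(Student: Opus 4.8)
The plan is to prove three things separately: correctness (the algorithm outputs exactly the MCRD sets, each once), the linear space bound, and the linear delay bound. For correctness, I would lean entirely on Lemma~\ref{lem:D-iff}, which characterizes MCRD sets by four conditions. The outer loop in line~\ref{lin:y1} starts \texttt{GENERATE} at every vertex $x$ adjacent to $y_{n_Y}$ with $Label[x]=k-1$, matching the first and third bullets of Lemma~\ref{lem:D-iff}. The recursive loop in line~\ref{lin:enum-xj} extends $D$ only by a vertex $x_j<x$ satisfying $left(x_j)\leq left(x)-1\leq right(x_j)<right(x)$ and $Label[x_j]=Label[x]-1$, which is precisely the fourth bullet together with the label condition $Label[u_i]=i-1$. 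The recursion halts and outputs when $Label[x]=0$, i.e.\ when $x$ is adjacent to $y_1$ by Lemma~\ref{lem:labelling}\ref{sta:a}. I would argue by induction on the recursion depth that a call \texttt{GENERATE}$(D,x)$ produces exactly the MCRD sets whose suffix from index $Label[x]+1$ onward equals the current $D$; the base case $Label[x]=0$ outputs the single completed set, and the inductive step combines the candidate successors. Distinctness follows because each MCRD set is determined by its sequence of labels $0,1,\ldots,k-1$ and the recursion reconstructs that sequence uniquely in reverse.

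The crux of the argument, and the claimed improvement over generic backtracking, is that \emph{every} recursive call leads to at least one output — no branch dead-ends. I would prove this by a backward-reachability invariant: whenever a vertex $x$ receives a finite label $a>0$ during Algorithm~\ref{alg:bft}, it was enqueued from some $u<x$ with $Label[u]=a-1$ satisfying the consecutive-union condition, which is exactly the predecessor condition tested in line~\ref{lin:enum-xj}. Hence for any $x$ with $Label[x]=a>0$ the candidate set searched in line~\ref{lin:enum-xj} is nonempty, and by descending induction on the label every partial $D$ extends all the way down to a vertex of label $0$. This is the step I expect to be the main obstacle, since it requires carefully matching the forward labelling rule of Algorithm~\ref{alg:bft} against the backward candidate test of Algorithm~\ref{alg:enumeration} and confirming they are genuine inverses — in particular that $left(x_j)\leq left(x)-1\leq right(x_j)<right(x)$ is equivalent to the forward condition $left(x_j)<left(x)\leq right(x_j)+1\leq right(x)$ used when building the queue.

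For the space bound, the recursion stack holds at most one partial set $D$ at a time, and $|D|\leq k\leq n_X$; the only persistent structures are the arrays $\mathcal N$ and $Label$, each of size $n_X$, so total working space beyond the input is $\mathcal O(n_X)$, i.e.\ linear. For the linear delay claim, I would bound the work between two consecutive outputs. Because no branch fails, the tree of recursive calls has all of its leaves as outputs; between emitting one MCRD set and the next, the algorithm unwinds some prefix of the stack and descends a new path, and the total edges traversed over the whole run is proportional to the number of calls, which is $\mathcal O(k)$ times the number of outputs. The delay between successive solutions is the cost of walking up from one leaf to the nearest branching ancestor and down to the next leaf; since each level does $\mathcal O(deg)$ work to find candidates in line~\ref{lin:enum-xj} and the path length is at most $k$, the per-solution delay is bounded by a quantity linear in the input size, giving linear delay after the linear preprocessing of Algorithm~\ref{alg:bft}. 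I would close by noting that the first solution is produced within the same linear bound, so the algorithm achieves linear delay with only linear preprocessing and linear space.
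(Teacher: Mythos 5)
Your proposal is correct and takes essentially the same route as the paper: soundness of each output via the characterization in Lemma~\ref{lem:D-iff}, the no-dead-end property via statement~\ref{sta:d} of Lemma~\ref{lem:labelling} (your ``backward-reachability invariant'' is exactly that statement, and the interval-condition equivalence you flag does hold and is used implicitly by the paper as well), and the same accounting for space and delay. The only cosmetic difference is that you establish completeness by induction on recursion depth, whereas the paper uses a minimal-counterexample argument (an unoutput MCRD set with the longest suffix agreeing with some output); both hinge on the identical observation that the predecessor vertex of any MCRD set passes the candidate test in line~\ref{lin:enum-xj}.
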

\begin{proof}
We first show that each call to the function GENERATE in line~\ref{lin:enum-D} will result in at least one output. To see this we note that by statement~\ref{sta:d} in Lemma~\ref{lem:labelling}, the set  built in line~\ref{lin:enum-xj} is always non-empty, which implies that with each recursive call to GENERATE  the cardinality of $D$ increases.

The output of the algorithm is an MCRD set by Lemma~\ref{lem:D-iff}. The algorithm will output at least one MCRD set by the above paragraph

Suppose there is an MCRD set $T=\{t_1,\ldots,t_k\}$ of cardinality $k$ that is not output by the algorithm. Among all such sets, assume $T$ has the longest sequence of vertices that agrees with an output of Algorithm~\ref{alg:enumeration}, i.e., there is a red dominating set $D=\{d_1,\ldots,d_{k}\}$, that is output by the algorithm, $t_i=d_i, a \leq i \leq k$, $T$ and $D$ have the smallest such index $a$. Since the loop in line~\ref{lin:y1} goes through all neighbours of $y_{n_Y}$ that have label $k-1$, we may assume that $a \leq k-1$.

By Lemma~\ref{lem:dom-neighb}, $left(t_{a-1})<left(t_{a})=left(d_{a}) \leq right(t_{a-1})  +1  \leq right(t_{a})=right(d_{a})$. Therefore,  $left(t_{a-1}) \leq left(d_{a})-1 \leq  right(t_{a-1}) < right(d_{a})$ and $t_{a-1}$ will be added to the set built in line~\ref{lin:enum-xj} in the call to function GENERATE with $x=d_a=t_a$. Hence the algorithm will output an MCRD set that agrees more with $D$ contradicting our choice of $T$. Therefore, the algorithm outputs all MCRD sets.

The algorithm needs space to represent $G$ via the array $\mathcal N$, array $Label$, and each output $D$. Therefore, the space requirement is $\mathcal O(n_X)$.

The time between each output is the time to go from a vertex with label $k-1$ adjacent to $y_{n_Y}$ down to a vertex with label $0$ adjacent to $y_{1}$, choosing those vertices in line~\ref{lin:enum-xj}. Since with each recursive call to GENERATE in line~\ref{lin:rec-D} the cardinality of $D$ increases, therefore, the delay is $\mathcal O(k)$. $\square$\end{proof}

\section{Conclusion}
We have shown that the red dominating set problem is $\textsc{NP}$-complete for perfect elimination bipartite graphs. 
We have shown that  $\mathcal O(d^k)$ is a tight upper bound on the number of minimum red dominating sets in bipartite graphs, where $d$ is the maximum degree in $Y$ and $k$ is the minimum cardinality of a red dominating set. We have presented an algorithm to enumerate all minimum cardinality red dominating sets in convex bipartite graphs in linear space and linear delay. One advantage of that algorithm is that its preprocessing step takes only linear time. The preprocessing step has been used to calculate the number of minimum cardinality red dominating sets in a convex bipartite graph. 
An interesting extension to the results presented in this paper is to explore whether the two stage labelling branch and bound enumeration technique is general enough to be applied to other enumeration problems while maintaining linearity. Because convex bipartite graphs are not symmetric, it may be worthwhile studying the enumeration of blue domination for convex bipartite graphs.




\begin{thebibliography}{8}

\bibitem{Karp72}
Karp, R.M.: Reducibility Among Combinatorial Problems. In: Miller R.E., Thatcher J.W., Bohlinger J.D. (eds) Complexity of Computer Computations. The IBM Research Symposia Series. Springer, Boston, MA 85-103(1972)

\bibitem{Chvatal79}
Chvatal, V.: A greedy heuristic for the set-covering problem. Mathematics of Operations Research \textbf{4}(3), 233-235 (1979)

\bibitem{AbuMouLied11}
Abu-Khzam, F. N.,  Mouawad A. E.,  Liedloff M.:  An exact algorithm for connected red-blue dominating set. J. Discrete Algorithms \textbf{9}, 252-262 (2011)


\bibitem{Weihe98}
Weihe, K.: Covering trains by stations or the power of data reduction. Proc. of the 1st Conference on Algorithms and Experiments, ALEX, 1-8 (1998)

\bibitem{DekSah84}
Dekel, E., Sahni, S.: A Parallel Matching Algorithm for Convex Bipartite Graphs and Applications to Scheduling. J. Parallel and Distributed Computing \textbf{1}. 185-205 (1984)

\bibitem{Golo16}
Golovach, P. A., Heggernes, P., Kanté, M. M., Kratsch, D., Villanger, Y.: Enumerating minimal dominating sets in chordal bipartite
graphs. Discrete Applied Mathematics \textbf{199}, 30-36 (2016)



\bibitem{ZhouBu20}
Zhou, J., Bu, C.: The enumeration of spanning tree of weighted graphs. J Algebr. Comb. \textbf{August}, 231-236 (2020)

\bibitem{LinChen17}
Lin, M., Chen, C.: Counting independent sets in tree convex bipartite graphs. Discrete Applied Mathematics \textbf{218}, 113-122 (2017)

\bibitem{Lin18}
Lin, M.: Counting independent sets and maximal independent sets in some subclasses of bipartite graphs. Discrete Applied Mathematics \textbf{251}, 236–244 (2018)


\bibitem{DamaMullKrat90}
Damaschke, P., M{\"u}ller, H., Kratsch D.: Domination in convex and chordal bipartite graphs. Information Processing Letters \textbf{36}, 231-236 (1990)

\bibitem{BoothLue76}
Booth, K.S., Lueker, G.S.: Testing for the consecutive ones property, interval graphs, and graph planarity using PQ-tree algorithms. Journal of Computer and System Sciences \textbf{13}(3), 335-379 (1976)

\bibitem{BranLeSpin99}
Brandst{\"a}dt, A,  Le, V. B., Spinrad, J. P.: Graph Classes: A Survey. SIAM,
(1999)




\bibitem{Golumbic80}
Golumbic, M. C.: Graphs. North-Holland, Amsterdam, Academic Press,
New York (1980)

\bibitem{Berge85}
Berge, C.: Graphs. North-Holland, Amsterdam,
(1985)



\end{thebibliography}
\end{document}